\documentclass[12pt]{article}

\usepackage{amsmath}
\usepackage{amssymb}
\usepackage{amsfonts}
\usepackage{latexsym}
\usepackage{braket}
\usepackage{color}

\catcode `\@=11 \@addtoreset{equation}{section}
\def\theequation{\arabic{section}.\arabic{equation}}
\catcode `\@=12

\newcommand{\be}{\begin{equation}}
	\newcommand{\en}{\end{equation}}
\newcommand{\bea}{\begin{eqnarray}}
	\newcommand{\ena}{\end{eqnarray}}
\newcommand{\beano}{\begin{eqnarray*}}
	\newcommand{\enano}{\end{eqnarray*}}
\newcommand{\bee}{\begin{enumerate}}
	\newcommand{\ene}{\end{enumerate}}

\newcommand{\mc}{\mathcal}

\newcommand{\D}{{\mc D}}

\newcommand{\Sc}{{\cal S}}

\newcommand{\F}{{\cal F}}
\newcommand{\G}{{\cal G}}

\newcommand{\Lc}{{\cal L}}

\newcommand{\1}{1 \!\! 1}

\newcommand{\ltwo}{{\Lc^2(\mathbb{R})}}
\newcommand{\scr}{{\Sc(\mathbb{R})}}

\newcommand{\Hil}{\mc H}

\newtheorem{thm}{Theorem}

\newtheorem{lemma}[thm]{Lemma}
\newtheorem{prop}[thm]{Proposition}
\newtheorem{defn}[thm]{Definition}

\newenvironment{proof}{\noindent {\bf Proof --}}{\hfill$\square$ \vspace{3mm}\endtrivlist}

\renewcommand{\l}{\langle}
\renewcommand{\r}{\rangle}
\newcommand{\pin}[2]{\l#1 , #2\r}

\catcode `\@=11 \@addtoreset{equation}{section}
\catcode `\@=12

\begin{document}
	
	\thispagestyle{empty}

	\vspace*{2cm}
	
	\begin{center}
		{\Large \bf An unified approach for the (inverted) quantum harmonic oscillator and the Berry-Keating hamiltonian}   
		
		\vspace{5mm}
		
		{\large F. Bagarello}\\
		 Dipartimento di Ingegneria,
		Universit\`a di Palermo,\\ I-90128  Palermo, Italy\\
		and I.N.F.N., Sezione di Napoli\\
		e-mail: fabio.bagarello@unipa.it\\

		\vspace{8mm}
		
		{\large E. Balistreri}\\
		Dipartimento di Fisica e Chimica,
		Universit\`a di Palermo,\\ I-90128  Palermo, Italy\\
		e-mail: emanuele.balistreri@community.unipa.it\\

		\vspace{8mm}
		
		{\large S. Ku\.{z}el}\\
		AGH University, Krak\'ow, Poland\\
		e-mail: kuzhel@agh.edu.pl\\

	\end{center}
	
	\vspace*{1cm}
	
	\begin{abstract}
	\noindent We propose an unified view to three well known quadratic Hamiltonians widely considered in quantum mechanics, i.e. those of the harmonic oscillator, of the inverted harmonic oscillator, and the Berry-Keating Hamiltonian. We describe an unified approach to determine the discrete spectrum of these operators, and to deduce their (generalized) eigenstates. We also look at these Hamiltonians as the generators of $\mathfrak{su}(1,1)$, and of a generalized version of it relevant in presence of non Hermitian generators.
	\end{abstract}
	
	\vspace{2cm}
	
	%{\bf PACS Numbers}:  .......
	
	\vfill

	%\pagenumbering{roman}
	
	\newpage
	
	\section{Introduction}
	
	There is no need to discuss here the relevance of the harmonic oscillator (HO) in classical and quantum mechanics. We only want to mention that it is one of the (relatively) few systems whose equations of motion can be solved exactly and that it often appears as a first-order effect when setting up a perturbative approach to deal with a more complicated system. We refer to \cite{mer}-\cite{cohen} for many well known results on HO, and for the analysis of the eigenstates and eigenvalues of its Hamiltonian $H_{HO}$. 
	
	On the other hand, the inverted harmonic oscillator (IHO) is not as ubiquitous as the HO is. The reason is, we believe, that from a physical point of view the IHO describes an unstable system and, as such, it is often considered to be unphysical. However, despite this apparently negative view to this system, the Hamiltonian of the IHO, $H_{IHO}$, has been investigated in recent years by many authors also because of the interesting mathematical results which are naturally connected with it. We refer to \cite{barton}-\cite{bagiqho} as a very partial list of recent (and not so recent) articles on this topic. In particular in \cite{bagiqho} one of us has proposed a distributional point of view to the eigenvectors of $H_{IHO}$, and to its coherent states. 
	
	As pointed out by many authors, the Hamiltonian of the IHO  is also connected to another interesting Hamiltonian, $H_{BK}$,  considered by Berry and Keating in \cite{bk1,bk2} in connection with the zeros of the Riemann's function. This relation was briefly considered, for instance, in \cite{sunda}. Since \cite{bk1,bk2}, many authors have started to investigate the system they proposed, due to the fact that, despite its apparent simplicity, it produces a lot of interesting results. We refer to \cite{bbm, kalauni} for some papers on this aspect.
	
	Our paper aims to give an unifying view to these three systems which, in fact, are all described by quadratic Hamiltonians in $\hat q$ and $\hat p$, the self-adjoint position and momentum operators. In particular, we will show that this unique settings can be obtained in two different ways. In Section \ref{sect2} we will introduce a single Hamiltonian depending on some parameters, $\alpha_0$, $\alpha_1$ and $\alpha_2$, that, fixing specific choices of them, returns either $H_{BK}$, $H_{HO}$ or $H_{IHO}$. We will see that these different possibilities are connected to different mathematical settings, in particular in agreement with what is discussed in \cite{bagiqho}. In addition, in Section \ref{sect3} we will discuss the role of an extended scalar product recently proposed in \cite{bag2023} in connection with $H_{IHO}$ where, as we will see, the ordinary scalar product in $\ltwo$ cannot work. Interestingly enough, in Section \ref{sect4} we show that an Hilbert space can still be introduced even in the original distributional settings, and we explore this possibility.
	
	In Section \ref{sect5} we explore another, apparently different, approach to the same Hamiltonians. In particular, we show that they are proportional to the generators of the $\mathfrak{su}(1,1)$ algebra, or to a slightly extended version of it first proposed in \cite{bag2021JPA} and useful when the generators are not necessarily self-adjoint. The results we obtain in this way reflect those deduced in Sections \ref{sect2} and \ref{sect3}, but with a completely different approach.
	
	Section \ref{sect6} contains our conclusions and plans for the future. To make the paper self-contained, we also include two appendices, where some aspects of the paper are further discussed. 
	
	\section{Our objective and main result}\label{sect2}
	We start by introducing the following well-known operators:
\be
	H_0={\hat p}^2+{\hat q}^2=H_{HO},  \quad
	H_1={\hat p}{\hat q}+{\hat q}{\hat p}=H_{BK},
	\quad
	H_2={\hat p}^2-{\hat q}^2=H_{IHO}.
	\label{hamtot}\en
	We observe that there is an unessential $\frac{1}{2}$ missing above. We prefer this notation to simplify some of the formulas below. Of course, it is not a big problem to go back to the standard definition of operators. In addition, to simplify further the notation, we prefer to use $H_0$, $H_1$ and $H_2$ rather than $H_{HO}$, $H_{BK}$, $H_{IHO}$.
	Our main purpose in this paper is to analyze their linear combination
	\begin{equation}\label{new5}
		H_\alpha=\alpha_0H_0+\alpha_1H_1+\alpha_2H_2=(\alpha_0+\alpha_2)\hat{p}^2+\alpha_1(\hat{p}\hat{q}+\hat{q}\hat{p})+(\alpha_0-\alpha_2)\hat{q}^2, \quad  \alpha_i\in\mathbb{R},
	\end{equation}
	when the coefficients $\alpha_j$ change. In particular,  when $\alpha_1$ and $\alpha_2$ are {\em small} compared to $\alpha_0$, the operator $H_\alpha$ can be thought of as a perturbed version of the (usual) quantum harmonic oscillator $H_0$. Similarly if $\alpha_0$ and $\alpha_1$ are much smaller than $\alpha_2$, then $H_\alpha$ is essentially a perturbed version of $H_{IHO}$. And so on.
	
	The Hamiltonian $H_\alpha$ can be rewritten in a simpler way with a canonical transformation. In doing so, we will clearly describe the different regimes of the Hamiltonian, showing that different ranges of the parameters produce different situations. With respect to the original Hamiltonian in (\ref{new5}) the Hamiltonian we will deduce here has the advantage to be diagonal, even if (possibly) not bounded from below. 
	
	A straightforward computation shows that, defining
	\be 
	\hat Q=\hat q\,\cos\theta+\hat p\,\sin\theta, \qquad \hat P=-\hat q\,\sin\theta+\hat p\,\cos\theta, 
	\label{41}\en
	we have $[\hat Q,\hat P]=i I$ for all $\theta\in [0,2\pi[$. Now, if we restrict ourselves to $\theta\neq\frac{\pi}{4}, \frac{3\pi}{4}$, and in particular we take
	\be 
	\tan(2\theta)=-\frac{\alpha_1}{\alpha_2}, 
	\label{42}\en
	then, introducing two new parameters
	\be 
	\beta_0=\alpha_0+\frac{\alpha_2}{\cos2\theta}, \qquad \beta_1=\alpha_0-\frac{\alpha_2}{\cos2\theta},
	\label{43}\en
	we can rewrite $H_\alpha$ as follows:
	\be 
	H_\alpha=\beta_0\hat P^2+\beta_1\hat Q^2.
	\label{44}\en
	The computations are simple and will not be given here. We observe that our constraint of $\theta$ implies that $\alpha_2\neq0$, which is not a major problem, since it only means that the coefficients of $\hat q^2$ and $\hat p^2$ in (\ref{new5}) must be different. Equations (\ref{42}) and (\ref{43}) can be easily inverted, and we find
	\be 
	\alpha_0=\frac{1}{2}(\beta_0+\beta_1), \qquad \alpha_1=\frac{1}{2}(\beta_1-\beta_0)\sin(2\theta), \qquad \alpha_2=\frac{1}{2}(\beta_0-\beta_1)\cos(2\theta).
	\label{45}\en
	Let us now define $\xi=\alpha_0^2-\alpha_1^2-\alpha_2^2$. Here, we do not require $\xi$ to be positive. We have the following 
	
	\begin{thm}\label{thm1}
		If $\xi>0$ the Hamiltonian $H_\alpha$	is (plus or minus) a quantum harmonic oscillator in $\hat Q$ and $\hat P$. If $\xi=0$ then $H_\alpha$ is (essentially) a free particle, in the same new variables. Finally, if $\xi<0$, then $H_\alpha$ is an inverted quantum harmonic oscillator, again in $\hat Q$ and $\hat P$.
	\end{thm}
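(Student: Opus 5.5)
The plan is to start from the diagonal form (\ref{44}), $H_\alpha=\beta_0\hat P^2+\beta_1\hat Q^2$, with $[\hat Q,\hat P]=iI$. The whole classification then reduces to the signs of $\beta_0$ and $\beta_1$. The key identity I will establish is
\be
\beta_0\beta_1=\alpha_0^2-\frac{\alpha_2^2}{\cos^2(2\theta)}=\alpha_0^2-\alpha_2^2\left(1+\tan^2(2\theta)\right)=\alpha_0^2-\alpha_2^2-\alpha_1^2=\xi,
\en
which follows directly from (\ref{43}) and (\ref{42}), using $1/\cos^2(2\theta)=1+\tan^2(2\theta)$. Here $\cos(2\theta)\neq0$ is guaranteed by the restriction $\theta\neq\frac{\pi}{4},\frac{3\pi}{4}$, and $\alpha_2\neq0$ is needed for (\ref{42}) to make sense. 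So $\xi$ is just the product of the two diagonal coefficients.

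With this identity the three cases are handled as follows.

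If $\xi>0$, then $\beta_0$ and $\beta_1$ are nonzero and have the same sign $\epsilon=\pm1$. I will rescale canonically, $\hat Q'=\lambda\hat Q$ and $\hat P'=\lambda^{-1}\hat P$ with $\lambda=(\beta_1/\beta_0)^{1/4}>0$, which preserves $[\hat Q',\hat P']=iI$. This gives
\be
H_\alpha=\epsilon\sqrt{\beta_0\beta_1}\,(\hat P'^2+\hat Q'^2)=\epsilon\sqrt{\xi}\,H_0(\hat Q',\hat P'),
\en
that is, plus or minus a harmonic oscillator. Equivalently, since $\beta_0+\beta_1=2\alpha_0$, the sign is $\epsilon=\mathrm{sgn}(\alpha_0)$; note that $\xi>0$ forces $\alpha_0\neq0$.

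If $\xi<0$, then $\beta_0$ and $\beta_1$ have opposite signs. The same rescaling with $\lambda=|\beta_1/\beta_0|^{1/4}$ gives $H_\alpha=\pm\sqrt{|\xi|}\,(\hat P'^2-\hat Q'^2)$. The minus sign can be absorbed by the canonical map $(\hat Q',\hat P')\mapsto(\hat P',-\hat Q')$, which turns $\hat P'^2-\hat Q'^2$ into $-(\hat P'^2-\hat Q'^2)$. Hence $H_\alpha$ is $\sqrt{|\xi|}$ times an IHO.

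If $\xi=0$, exactly one of $\beta_0,\beta_1$ vanishes, unless both do, which would force $\alpha_0=\alpha_2=0$ and contradict $\alpha_2\neq0$. Then $H_\alpha$ is either $\beta_0\hat P^2$ (a free particle) or $\beta_1\hat Q^2$. The second case becomes a free particle in momentum form under the canonical exchange $(\hat Q,\hat P)\mapsto(\hat P,-\hat Q)$, which is what ``essentially'' refers to.

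The only real obstacle is bookkeeping rather than analysis. First, I must check that the rotation (\ref{41}) together with the choice (\ref{42}) indeed yields (\ref{44}). This amounts to verifying that the cross term coefficient $\alpha_1\cos2\theta+\alpha_2\sin2\theta$ vanishes, and that the diagonal coefficients are $\alpha_0\pm\alpha_2/\cos2\theta$; I will verify this by expanding $\hat Q^2$, $\hat P^2$ and comparing with (\ref{45}). Second, I must note that the rescalings used are unitarily implementable (dilations), so ``is a harmonic oscillator'' means unitary equivalence up to a positive multiplicative constant. Finally, the degenerate case $\alpha_2=0$ excluded by (\ref{42}) can be treated separately: there $H_\alpha=\alpha_0(\hat p^2+\hat q^2)+\alpha_1H_1$, and taking $\theta=\pi/4$ diagonalizes it with coefficients $\alpha_0\pm\alpha_1$, whose product is again $\xi$.
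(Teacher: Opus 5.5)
Your proposal is correct and takes the same route as the paper, whose proof is exactly the identity $\xi=\beta_0\beta_1$ combined with the diagonal form $H_\alpha=\beta_0\hat P^2+\beta_1\hat Q^2$. You fill in details the paper leaves implicit: the $1/\cos^2(2\theta)=1+\tan^2(2\theta)$ computation, the normalizing dilation, the Fourier-type exchange that absorbs signs, and the $\alpha_2=0$ case with coefficients $\alpha_0\mp\alpha_1$, which the paper treats separately in its own subsection.
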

	\begin{proof}
		The proof immediately follows from the identity $\xi=\beta_0\beta_1$, and from the expression (\ref{44}) of $H_\alpha$.
	\end{proof}
	
	\vspace{2mm}
	{\bf Remarks:--} (1) If $\xi=0$ then either $\beta_0$ or $\beta_1$ (or both) are zero. If we consider the case of $\beta_1=0$, then $H_\alpha$ is a free particle in the new variable $\hat Q$. Analogously, if $\beta_0=0$, then $H_\alpha$ is again a free particle, but in the Fourier transform space. Of course, when both $\beta_0$ and $\beta_1$ are zero the situation is not really interesting: $H_\alpha=0$.
	
	(2) If $\xi<0$ $\beta_0$ and $\beta_1$ must have different signs. For instance, we can imagine that $\beta_0>0$ while $\beta_1<0$. From (\ref{44}) we see that $H_\alpha$ is, in fact, an inverted harmonic oscillator in $\hat Q$ and $\hat P$ which can be treated as in \cite{bagiqho}. 
	
	(3) Of course, we cannot use the representation (\ref{44}) of $H_\alpha$ to analyze the Berry-Keating Hamiltonian, since this can be recovered only if $\alpha_0=\alpha_2=0$, which creates some difficulties with formula (\ref{42}). However, the conclusions are similar to the previous ones, as we will show in what follows.
	
	\subsection{The case $\alpha_2=0$}
	
	In this case, $H_\alpha$ in (\ref{new5}) reads
	\begin{equation}\label{46}
		H_\alpha=\alpha_0\hat{p}^2+\alpha_1(\hat{p}\hat{q}+\hat{q}\hat{p})+\alpha_0\hat{q}^2, \quad  \alpha_i\in\mathbb{R},
	\end{equation}
	so that the coefficients of $\hat p^2$ and $\hat q^2$ coincide. In this case (\ref{41}) can still be used, but with $\theta=\frac{\pi}{4}$ if we want to avoid the case of $\beta_0=\beta_1$ necessarily. In this case $H_\alpha$ above can still be written as in (\ref{44}), but with the following relations between the $\alpha$'s and the $\beta$'s:
	\be \label{46b}
	\alpha_0=\frac{1}{2}(\beta_0+\beta_1), \qquad \alpha_1=\frac{1}{2}(\beta_1-\beta_0), 
	\en
	with, of course, $\alpha_2=0$, or, inverting these,
	\be 
	\beta_0=\alpha_0-\alpha_1, \qquad \beta_1=\alpha_0+\alpha_1.
	\label{47}\en
	Again we find that
	$$
	\xi=\beta_0\beta_1,
	$$
	 so that the conclusion is the same as above: the sign of $\xi$ is linked to the product between $\beta_0$ and $\beta_1$. For this reason, no real differences emerge with respect to what stated in Theorem \ref{thm1}.
	
	\vspace{2mm}
	
	{\bf Remark:--} The solution in (\ref{47}) could also be found by looking directly at (\ref{42}), and sending $\alpha_2$ to zero while keeping, in particular, $\frac{\alpha_1}{\alpha_2}$ negative. In this case (\ref{42}) admits $\theta=\frac{\pi}{4}$ as a possible solution, which is exactly what we have deduced before. Other solutions, of course, also exist.
	
	\vspace{2mm}
	
	It might be interesting to briefly discuss what happens when $H_\alpha$ can be seen as a perturbed version of $H_0$. For that we assume that $\alpha_1$ and $\alpha_2$ can be written as $\alpha_j=\alpha_0\epsilon_j$, $j=1,2$, so that $H_\alpha=\alpha_0(H_0+\epsilon_1H_1+\epsilon_2H_2)$ and, if $\epsilon_1$ and $\epsilon_2$ are small, then $H_\alpha/\alpha_0$ does not differ much from $H_0$. Here we will only consider non zero $\epsilon_j$. Then, from (\ref{42}) we get that $\tan(2\theta)=-\frac{\epsilon_1}{\epsilon_2}$, so that we can write $\sin(2\theta)=-\gamma\epsilon_1$ and $\cos(2\theta)=\gamma\epsilon_2$ for  $\gamma$ such that $\gamma^2=\frac{1}{\epsilon_1^2+\epsilon_2^2}$. Then we can rewrite
	\be 
	H_\alpha=\frac{\alpha_0}{\gamma}\left((\gamma+1)\hat P^2+(\gamma-1)\hat Q^2\right).
	\label{47b}\en
	This shows the following: $\gamma>1$ corresponds to a harmonic oscillator, while $\gamma=1$ returns a free particle and $\gamma<1$ an inverted oscillator. This is in agreement with our view of $H_\alpha$ as a perturbed version of $H_0$. In fact, if $\gamma\leq1$, then $\epsilon_1^2+\epsilon_2^2\geq1$, which means that at least one of the $\epsilon_j$ should be {\em large enough}. Hence either $H_1$ or $H_2$, or both, are "comparable" with $H_0$, and therefore they cannot be thought to be perturbations of $H_0$. On the other hand, this is possible when $\gamma>1$ since, in this case, $\epsilon_1^2+\epsilon_2^2<1$ implies that 
	$\epsilon_1$ and $\epsilon_2$ cannot be large.

	\section{The analysis of $\beta_0\hat P^2+\beta_1\hat Q^2$}\label{sect3} 
	
	This section is devoted to the analysis of the operator in (\ref{44}) which, for obvious reasons, from now on we will indicate with $H_\beta$ rather than $H_\alpha$:
	\be 
	H_\beta=\beta_0\hat P^2+\beta_1\hat Q^2,
	\label{51}\en
	considering the case of $\beta_0\beta_1$ positive or negative. For concreteness, and since this is not a major constraint, we will focus on the following situations: 
	
	{\bf case nr. 1} (HO): $\beta_0>0$ and $\beta_1>0$, and 
	
	{\bf case nr. 2} (IHO): $\beta_0>0$ and $\beta_1<0$.
	
	Here HO and IHO stand for, respectively, {\em Harmonic Oscillator} and {\em Inverted Harmonic Oscillator}, as for the Hamiltonians in (\ref{hamtot}). In both cases we need to introduce the positive quantity $\Omega^2=\frac{|\beta_1|}{\beta_0}$.
	
	\vspace{2mm}
	
	The case nr. 1 is very well known: if we introduce the bosonic operators
	\be 
	c=\sqrt{\frac{\Omega}{2}}\left(\hat Q+i\frac{\hat P}{\Omega}\right), \qquad c^{\dag}=\sqrt{\frac{\Omega}{2}}\left(\hat Q-i\frac{\hat P}{\Omega}\right),
	\label{52}\en

	we can rewrite $H_\beta$ in the following form:
	\be 
	H_\beta=2\beta_0\Omega\left(c^\dagger c+\frac{1}{2}\1\right)=2\beta_0\tilde H_{HO},
	\label{53}\en
	with obvious notation.
	Here and in the following, we will always assume that $\Omega>0$. Of course, $[c,c^\dagger]=\1$. The eigenstates and the eigenvalues of $H_\beta$ are the standard ones:
	\be 
	\varphi_n(x)=\frac{1}{\sqrt{2^n\,n!}}\left(\frac{\Omega}{\pi}\right)^{1/4}H_n\left(\sqrt{\Omega}\,x\right)e^{-\frac{1}{2}\Omega x^2},
	\label{54}\en
	where $H_n(x)$ is the $n$-th Hermite polynomial\footnote{It is clear that $x$ is the variable associated to $\hat Q$, meaning with this that $\hat Q g(x)=xg(x)$ for all $g\in D(\hat Q)$, the domain of the operator $\hat Q$.}, with
	\be 
	H_\beta \varphi_n(x)=E_n\varphi_n(x), \qquad E_n=2\beta_0\Omega\left(n+\frac{1}{2}\right)=2\sqrt{\beta_0\beta_1}\left(n+\frac{1}{2}\right).
	\label{55}\en
	It is clear that $\varphi_n(x)\in\ltwo$ for all $n=0,1,2,3,\ldots$. It is also well known that $\langle\varphi_n,\varphi_m\rangle=\delta_{n,m}$.

	\vspace{3mm}
	
	The case nr. 2 is probably less known, and less discussed in the literature, even if it has attracted some interest in recent years. In particular, in \cite{bagiqho} it was discussed in connection with the so-called {\em weak pseudo-bosons}, \cite{bagspringerbook}.
	
	In this case, we have
	\be 
	H_\beta=2\beta_0\tilde H_{IHO}, \qquad\mbox{where }\quad  \tilde H_{IHO}=\frac{1}{2}\left(\hat P^2-\Omega^2\hat Q^2\right). 
	\label{56}\en
	If we introduce the operators
	\be
	A_\pm=\frac{1}{\sqrt{2\Omega}}\left(e^{\pm i\pi/4}\Omega\,\hat Q+i\,e^{\mp i\pi/4}\hat P\right), \,\,\, B_\pm=\frac{1}{\sqrt{2\Omega}}\left(e^{\pm i\pi/4}\Omega\,\hat Q-i\,e^{\mp i\pi/4}\hat P\right),
	\label{57}\en
	and the functions 
	\be
	\varphi_n^{(\pm)}(x)=\frac{e^{\pm i\pi/8}}{\sqrt{2^n\,n!}}\left(\frac{\Omega}{\pi}\right)^{1/4}H_n\left(e^{\pm i\pi/4}\sqrt{\Omega}\,x\right)e^{\mp\,\frac{i}{2}\,\Omega \,x^2}
	\label{58}\en
	with $\psi_n^{(\pm)}(x)=\varphi_n^{(\mp)}(x)$, the following facts and results are proved in \cite{bagiqho}:
	
	\begin{enumerate}
		\item 
		\be
		B_\pm^\dagger=A_\mp, \qquad A_\pm^\dagger=B_\mp,
		\label{59}\en
		and
		\be
		A_\pm=\frac{c\pm i c^\dagger}{\sqrt{2}}, \qquad B_\pm=\frac{c^\dagger\pm i c}{\sqrt{2}}.
		\label{510}\en
		All of these operators leave $\scr$ stable. In particular,
		\be
		[A_\pm,B_\pm]f=f,
		\label{511}\en
		for all $f\in\scr$. Hence, the pairs $(A_\pm,B_\pm)$ are pseudo-bosonic, \cite{bagspringerbook}.

		\item The following ladder equations are satisfied:
		\be
		\left\{
		\begin{array}{ll}
			A_\pm\,\varphi_0^{(\pm)}=0,\qquad A_\pm\,\varphi_n^{(\pm)}(x)=\sqrt{n}\,\varphi_{n-1}^{(\pm)}, \hspace{2cm} n\geq 1,\\
			B_\pm\,\varphi_n^{(\pm)}=\sqrt{n+1}\,\varphi_{n+1}^{(\pm)}, \hspace{5cm} n\geq 0,
		\end{array}
		\right.
		\label{512}\en
		and
		\be
		\left\{
		\begin{array}{ll}
			B_\pm^\dagger\,\psi_0^{(\pm)}=0,\qquad B_\pm^\dagger\,\psi_n^{(\pm)}(x)=\sqrt{n}\,\psi_{n-1}^{(\pm)}, \hspace{2cm} n\geq 1,\\
			A_\pm^\dagger\,\psi_n^{(\pm)}=\sqrt{n+1}\,\psi_{n+1}^{(\pm)}, \hspace{5cm} n\geq 0.
		\end{array}
		\right.
		\label{513}\en
		
		\item The Hamiltonian $\tilde H_{IHO}$ can be rewritten in terms of the operators $A_\pm$ and $B_\pm$ as follows:
		\be 
		\tilde H_{IHO}=i\Omega\left(B_+ A_++\frac{1}{2}\,\1\right)=-i\Omega\left(B_- A_-+\frac{1}{2}\,\1\right).
		\label{514}\en
		
		\item 
		\be
		\tilde H_{IHO} \varphi_n^{(\pm)}=\pm i\Omega \left(n+\frac{1}{2}\right)\varphi_n^{(\pm)}, \quad n\geq0.
		\label{515}\en

		 From (\ref{56}) we deduce that
		\be
		H_{\beta} \varphi_n^{(\pm)}=\pm 2i\beta_0\Omega \left(n+\frac{1}{2}\right)\varphi_n^{(\pm)}, \quad n\geq0.
		\label{516}\en
	  Then, in case nr. 2 the originally real and positive eigenvalues of $H_\beta$ turn into purely imaginary numbers, uniformly distributed along the imaginary axis.

	\end{enumerate}

	\vspace{2mm}
	
	{\bf Remark:--} It is easy to construct a formal map from $\tilde H_{HO}$ to $\tilde H_{IHO}$. It is sufficient to consider an operator (actually, two) $T_\pm$ acting on $\Omega$ as follows: 
	\be
	T_\pm(\Omega)=e^{\pm i\frac{\pi}{2}}\Omega=\pm i\Omega.
\label{T1}\en
	Then we have $T_\pm(\tilde H_{HO})T^{-1}_\pm=\tilde H_{IHO}$ and $T_\pm[\varphi_n(x)]=\varphi_n^{(\pm)}(x)$. In line with these results, we observe that
	$$
	\tilde H_{IHO}\varphi_n^{(\pm)}=T_\pm \left(\tilde H_{HO}\varphi_n\right)=T_\pm \left(E_n\varphi_n\right)=\pm i E_n \varphi_n^{(\pm)},
	$$
	as we expect. In fact, this is exactly formula (\ref{515}). We will return to $T_\pm$,  and to their role in our analysis in Section \ref{sect3}, adopting a rigorous approach to this operator, which indeed has a very interesting interpretation and role, as we will see.
	
	\vspace{2mm}
	
	In \cite{bagiqho} the nature of the extended eigenfunctions $\varphi_n^{(\pm)}(x)$ has been investigated in the context of distribution theory. In particular, the following quantities have been introduced for $n\geq0$:
	\be
	\Phi_n^{(\pm)}[f]=\langle\varphi_n^{(\pm)},f\rangle, \quad \Psi_n^{(\pm)}[g]=\langle\psi_n^{(\pm)},g\rangle, \qquad f, g\in\scr.
	\label{517}\en
	  Here $\langle.,.\rangle$ is the extension of the ordinary scalar product to {\em compatible pairs}, i.e. to pairs of functions with are, when multiplied together, integrable, but separately they are not (or, at least, one is not). Then, \cite{bagiqho}, one can prove  that $\Phi_n^{(\pm)}[f]$ and  $\Psi_n^{(\pm)}[g]$ are well defined, linear, and continuous in the natural topology $\tau_\Sc$ in $\scr$. In a few words, they are tempered distributions: $\Phi_n^{(\pm)}, \Psi_n^{(\pm)}\in\Sc'(\mathbb{R})$.
	
	This result suggests that it does not make much sense to ask if different eigenvectors are mutually orthogonal. In fact, it is clear that, for instance, $\langle\varphi_n^{(+)},\varphi_m^{(+)}\rangle$ is not convergent. Indeed, this is essentially the integral in $\mathbb{R}$ of a polynomial of degree $n+m$. However, in some recent papers, \cite{bag2023,bag2025}, a possible extended scalar product has been proposed, which coincides with the ordinary product in $\ltwo$ for square-integrable functions, but can also be extended to other functions or even to some distributions.
	
	The idea is the following: let us consider an orthonormal (o.n) basis in $\ltwo$,
	$$
    \F_e=\{e_{n}\in \Sc(\mathbb{R}), \ \, \ n\geq0\}.
    $$ In what follows, for concreteness, $\F_e$ will be taken as the set of eigenstates of the quantum harmonic oscillator\footnote{Hence $e_n(x)$ essentially coincides with the functions $\varphi_n(x)$ in (\ref{54}). We use a different notation to stress the fact that $e_n(x)$ here is {\bf not} necessarily the o.n. basis of an harmonic oscillator.} with ladder operators $c$ and $c^\dagger$ in (\ref{52}). Using the Parseval identity, we have that for all $f, g\in \ltwo$, 
	$$
	\pin{f}{g}=\sum_{n}\pin{f}{e_n}\pin{e_n}{g}=\sum_{n} \overline{f}[e_n]\,g[\overline{e_n}]=\sum_{n} \overline{f}[e_n]\,g[{e_n}],
	$$
	since each $e_n$ is real. Here we use the notation
	\be
	\overline{h}[c]=\int_{\mathbb{R}} \overline{h(x)}\,c(x)\, dx=\pin{h}{c}.
	\label{518}\en
	Of course, in the Parseval identity above, the particular choice of $\F_e$ is not relevant, as far as $f, g\in \ltwo$.  Now, since $e_n\in\Sc(\mathbb{R})$, the following quantity is well defined:
	\be
	\overline{K}[e_n]=\pin{K}{e_n},
	\label{23}\en
	for all $K\in\Sc'(\mathbb{R})$.
	What might exist or not is the following sum
	\be
	\pin{F}{G}_e=\sum_{n}\overline{F}[e_n]G[e_n],
	\label{24}\en
	where $F, G\in\Sc'(\mathbb{R})$ are two generic tempered distributions. For this reason, in \cite{bag2023}  the following definition has been proposed:
	
	\begin{defn}\label{def1}
		Two tempered distributions $F, G\in\Sc'(\mathbb{R})$ are $\F_e$-multiplicable if the series in (\ref{24}) converges. In this case, we call $\pin{F}{G}_e$ the $e$-product of $F$ and $G$.
	\end{defn}
	
	We refer to \cite{bag2023,bag2025} for more results and examples of this multiplication. Here we want to show that (at least) some of the $\varphi_n^{(\pm)}$'s admit a finite $e$-multiplication. More explicitly, we will show that
	\be 
	\pin{\varphi_0^{(+)}}{\varphi_0^{(+)}}_e=\infty, \qquad \mbox{ but }\qquad \pin{\varphi_0^{(+)}}{\varphi_1^{(+)}}_e=0.
	\label{519}\en
	We will also briefly discuss how to extend these results to other eigenstates.
	
	The first useful fact is that, using (\ref{510}) and $A_+\,\varphi_0^{(+)}=0$ we find that $c\,\varphi_0^{(+)}=-i\,c^\dagger\,\varphi_0^{(+)}$. Then, with simple computations, we deduce the following:
	\be
	\pin{\varphi_0^{(+)}}{e_l}=\left\{
	\begin{array}{ll}
		0, \hspace{7cm} l=2k+1,\\
		i^k\sqrt{\frac{(2k-1)!!}{(2k)!!}}\pin{\varphi_0^{(+)}}{e_l}, \hspace{4cm} l=2k,
	\end{array}
	\right.
	\label{520}\en
	where $k=0,1,2,\ldots$. Then we have
	\be 
	\pin{\varphi_0^{(+)}}{\varphi_0^{(+)}}_e=\sum_{l=0}^\infty|\pin{\varphi_0^{(+)}}{e_l}|^2=\sum_{k=0}^\infty|\pin{\varphi_0^{(+)}}{e_{2k}}|^2=|\pin{\varphi_0^{(+)}}{e_{0}}|^2\left(1+\sum_{k=1}^\infty\frac{(2k-1)!!}{(2k)!!}\right)=\infty,
	\label{521}\en
	as we can check using the Gauss test for series. The conclusion is that in this case the ordinary scalar product in $\ltwo$ does not differ from the $e$-multiplication.
	
	To prove the second result in (\ref{519}) we first observe that 
	$$
	\pin{\varphi_0^{(+)}}{\varphi_1^{(+)}}_e=\sum_{l=0}^\infty\pin{\varphi_0^{(+)}}{e_n}\pin{e_n}{\varphi_1^{(+)}}=\sum_{k=0}^\infty\pin{\varphi_0^{(+)}}{e_{2k}}\pin{e_{2k}}{\varphi_1^{(+)}},
	$$
	using the fact that, see (\ref{520}), $\pin{\varphi_0^{(+)}}{e_{2k+1}}=0$ for all $k$. Now, since $\varphi_1^{(+)}=B_+\varphi_0^{(+)}=\frac{1}{\sqrt{2}}(c^\dagger+ic)\varphi_0^{(+)}$, we find that
	$$
	\pin{e_{l}}{\varphi_1^{(+)}}=\left\{
	\begin{array}{ll}
		\frac{i}{\sqrt{2}}\pin{e_1}{\varphi_0^{(+)}}=0, \hspace{7cm} l=0,\\
		\frac{1}{\sqrt{2}}\left(\sqrt{l}\,\pin{e_{l-1}}{\varphi_0^{(+)}}+i\sqrt{l+1}\pin{e_{l+1}}{\varphi_0^{(+)}}\right), \hspace{2.7cm} l\geq1,
	\end{array}
	\right.
	$$
	using again  (\ref{520}). It is now clear that, if $l=2k$, then $l\pm1$ are odd, and using once more (\ref{520}), we conclude that $\pin{e_{2k}}{\varphi_1^{(+)}}=0$ for all $k=0,1,2,3,\ldots$. Hence, our claim follows. 
	
	The same trick can be adopted to compute, for instance 	$\pin{\varphi_0^{(-)}}{\varphi_1^{(-)}}_e$, or other extended scalar products, but we will not do it here. We just want to comment that the role of distributions will also be relevant in case of $\beta_0\beta_1=0$ since, in this case, the eigenstates of $H_\beta$ are noting than plane wave (in $x$ or in the Fourier transform). Then again, $\ltwo$ is not the appropriate space to work with.
	
	\vspace{2mm}
	
	{\bf Remark:--} The Berry-Keating Hamiltonian can be recovered setting $\alpha_0=0$ in (\ref{46}). This means, see (\ref{46b}), that $\beta_0=-\beta_1$, so that $\alpha_1=\beta_1$. Hence, the Hamiltonian {$	H_\alpha=\alpha_1(\hat{p}\hat{q}+\hat{q}\hat{p})$} can be rewritten as $H_\beta=\beta_0\left(\hat P^2-\hat Q^2\right)$: $\Omega=\Omega^2=1$: the Berry-Keating Hamiltonian is just the Hamiltonian for the inverted quantum oscillator, but in different variables.
	
	\section{The operators $T_\pm$}\label{sect4}
	
	In this section, we will return to the operators $T_\pm$ considered in (\ref{T1}), and we will describe, in a rigorous way, what we can derive from them. 
	
	In this perspective, we define $T_\pm$ by their action on the vectors $\varphi_n$ of the o.n. basis of $H_0$ in (\ref{hamtot}):
	\be 
	T_\pm[\varphi_n]=\varphi_n^{(\pm)}, \quad n\geq0.
	\label{61}\en
	 Since $\F_\varphi=\{\varphi_n\}$ is an o.n. basis for $\ltwo$, $T_\pm$ are both densely defined on $\Lc_\varphi=l.s.\{\varphi_n\}$, the linear span of the $\varphi_n$'s. However, based on what we have seen before, the range of $\Lc_\varphi$ is not in $\ltwo$. If we call $\Sigma_\pm=l.s.\{\varphi_n^{(\pm)}\}$, it follows that $T_\pm:\Lc_\varphi\rightarrow\Sigma_\pm$, which can be identified as subspaces of $\Sc'(\mathbb{R})$. It is natural to introduce a kind of inverse mapping of $T_\pm$. We call these maps $V_\pm$ and we assume that
	\be 
	V_\pm[\varphi_n^{(\pm)}]=\varphi_n, \quad n\geq0.
	\label{62}\en
	 Of course, it does not make sense to analyze the density of the domain of definition of these operators. However, it is clear that
	\be 
	T_\pm V_\pm=\1_{\Sigma_\pm}, \qquad V_\pm T_\pm=\1_{\Lc_\varphi},
	\label{63}\en
	where $\1_{\Sigma_\pm}$ and $\1_{\Lc_\varphi}$ are the identity operators on $\Sigma_\pm$ and $\Lc_\varphi$, respectively. 
	
	\vspace{2mm}

	{\bf Remark:--} Because of (\ref{T1}) one can argue, at least on a formal level that, $V_{\pm}=T_{\mp}$. However, to transform this into a rigorous statement some work needs to be done. However, since this aspect is not so relevant here, we will not really consider this aspect in detail.
    
%	{\color{red}
%    In my view equation (3.17) it's not fully rigourous. Treat the action of an operator as action on the only parameter \(\Omega\)
%     seems a bit hazardous to me. 
%    Morover, in my opinion, the result presented here it's not consistent with the domain of the operators.
%    \(T_{+}:L_{\varphi}\rightarrow\Sigma_{+}\)
%    the inverse should be defined on \(\Sigma_{+}\), but \(T_{-}\) it's definend on \(L_{\varphi}\), that has only trivial intersection with \(\Sigma_{+}\).
%    The result can be estabilished if we extend this two operators on al larger space, that contain \(L_{\varphi}\) and \(\Sigma_{\pm}\). This is the case of \(\mathcal{G}\) defined in the appendix. We can try to find two operators \(\tilde{T}_{\pm}=\mathcal{G}\rightarrow\mathcal{G}\), such that the restriction of \(\tilde{T}_{\pm}\) on \(L_{\varphi}\) it's equal to \(T_{\pm}\). At this point the only thing that we have to prove is that the restriction on \(\tilde{T}_{\pm}\) on \(\Sigma_{\mp}\), coincides with \(V_{\mp}\).
%    My suggestion for the operator \(\tilde{T}_{\pm}\) remain \(\tilde{T}_{\pm}={exp(\mp\frac{\pi}{8}H_{BK})}\)
%    
%    
%    }
	\vspace{2mm}
	
	The operators $V_\pm$ allow us to introduce two maps respectively  on $\Sigma_\pm$ as follows:
	\be 
	(\Phi_\pm,\Psi_\pm)_\pm=\langle V_\pm\Phi_\pm,V_\pm\Psi_\pm\rangle, \qquad \Phi_\pm, \Psi_\pm\in\Sigma_\pm.
	\label{64}\en
	 Here $\langle.,.\rangle$ is the usual scalar product in $\ltwo$. It is easy to see that $(.,.)_+$ is a scalar product on $\Sigma_+$. Analogously, $(.,.)_-$ is a scalar product on $\Sigma_-$. In particular, for instance, we have that $(\Phi_+,\Phi_+)_+=\langle V_+\Phi_+,V_+\Phi_+\rangle\geq0$,  for all $\Phi_+\in\Sigma_+$, clearly. Moreover, $(\Phi_+,\Phi_+)_+=0$ if and only if $V_+\Phi_+=0$. But, acting to the left of this equality with $T_+$ and using (\ref{63}), we conclude that $\Phi_+=0$. 
	
	This result has two interesting consequences. The first is that the various $\varphi_n^{(\pm)}$ are o.n. with respect to $(.,.)_\pm$:
	\be 
	(\varphi_n^{(\pm)},\varphi_m^{(\pm)})_\pm=\langle V_\pm\varphi_n^{(\pm)},V_\pm\varphi_m^{(\pm)}\rangle=\langle \varphi_n,\varphi_m\rangle=\delta_{n,m},
	\label{65}\en
	which, in a sense, generalizes what we have found using the $e$-multiplication. 
	
	The second consequence is that $(.,.)_\pm$ define two norms $\|.\|_\pm$ on, respectively, $\Sigma_\pm$, so that their completions
	\be 
	\Hil_\pm=\overline{\Sigma_\pm}^{\|.\|_\pm}
	\label{66}\en
are indeed Hilbert spaces, containing distributions among their elements. 

The conclusion is therefore that the eigenstates of $H_\beta$ in case nr. 2 can still be seen as elements of some Hilbert space, but endowed with a different scalar product. The fact that the eigenvalues of $H_\beta$ are purely imaginary follows from the nature of the Hamiltonian, which looks formally self-adjoint with respect to $\langle.,.\rangle$, but loses its (rigorous, or even formal) self-adjointness with respect to $(.,.)_\pm$.

\section{$H_j$ as generators of Lie algebras}\label{sect5}

In this section, we propose a different point of view to the operators $H_j$, $j=0,1,2$, showing that they are strictly connected to some interesting algebras, and in particular to $\mathfrak{su}(1,1)$, and to its extended version, firstly considered, in our knowledge, in \cite{bag2021JPA} in the context of non-Hermitiam quantum mechanics. Many of the results derived here and in previous papers on these operators can be restated and reformulated in this Lie-algebraic approach.

The starting point are the following commutation rules among $H_j$:
\begin{equation}
	[H_{0},H_{1}]=-4iH_{2}, \quad
	[H_{1},H_{2}]=4iH_{0} \quad 
	[H_2,H_{0}]=-4iH_{1} \quad 
	\label{la1}
\end{equation}
where indeed all these formulas should be understood acting, for instance, on functions in $\Sc(\mathbb{R})$, which is stable under the action of these different Hamiltonians. This is not the only choice of an invariant set of functions where $H_j$ are all defined. Another such space is described in Appendix \ref{appendixB}. Still another possible approach would be to construct, if possible, a *-algebra of unbounded operators as, e.g., in \cite{bag2021JPA}, having each $H_j$ as its elements. In this case, $H_jH_k$ makes sense, and it is an element of the same algebra. However, we will not follow this idea: we prefer to work on the set $\Sc(\mathbb{R})$. 

Formulas (\ref{la1}) suggest the introduction of the following operators:
\begin{equation}
	K_1=\frac{1}{4}H_{1}, \qquad 
	K_2=\frac{1}{4}H_{2}, \qquad 
	K_3=-\frac{1}{4}H_{0}.   
	\label{la2}
\end{equation}
Hence, they are the generator of a ${\bf su}(1,1)$ algebra, since they obey the following commutators (which, again, we must interpret properly, i.e., acting on test functions):
\be
[K_1,K_2]=-iK_3,\qquad [K_2,K_3]=iK_1, \qquad [K_3,K_1]=iK_2.
\label{la3}\en
Notice that, at least formally, $K_j=K_j^\dagger$, $j=1,2,3$. Following the standard procedure for  ${\bf su}(1,1)$, we first define
\be
K_0=K_3, \qquad K_\pm=K_1\pm i K_2, \qquad K^2=K_0^2-K_0-K_+K_-,
\label{la4}\en
 and we observe that
 \be
 K_0^\dagger=K_0, \quad K_\pm^\dagger =K_\mp, \quad [K_0,K_\pm]=\pm K_\pm, \quad [K_+,K_-]=-2K_0, \quad [K^2,K_j]=0,
 \label{la5}\en
	$j=1,2,3,\pm$. It is well, known, but it is useful to stress, that despite the notation, $K^2$ is not a positive operator, as its definition clearly suggests. The reason for calling this operator $K^2$ in the literature follows from a parallel with the algebra of the angular momentum, $\mathfrak{su}(2)$. Because, in particular, $[K^2,K_3]=0$, we can look to simultaneous eigenstates of these two operators. We refer to \cite{adams} for details on $\mathfrak{su}(1,1)$. Here we only recall that, calling $\Phi_{j,q}$ a joint eigenvector of $K^2$ and $K_0$, one usually writes
	\be
	K^2\Phi_{j,q}=j(j+1)\Phi_{j,q},\qquad K_0\Phi_{j,q}=q\Phi_{j,q},
	\label{la6}\en	
	where $j$ is not necessarily positive, due to the fact that $K^2$ is not a positive operator. $K_\pm$ behave as ladder operators in the following sense: if $K_\pm\Phi_{j,q}\neq0$, then
	\be
	K_+\Phi_{j,q}=(q-j)\Phi_{j,q+1}, \qquad K_-\Phi_{j,q}=(q+j)\Phi_{j,q-1}.
	\label{la7}\en
	The set of eigenvalues of $K_0$ could be bounded from above and below, bounded from above and unbounded below, unbounded from above and bounded below, or fully unbounded, \cite{adams}. In our case, with the choice in (\ref{la2}), we will show in a moment that we are in the second situation: there is a maximum eigenvalue for $K_0$, but no minimum. To show this, we start introducing the operators \(a=\frac{1}{\sqrt{2}}(\hat q+i\hat p)\) and \(a^{\dag}=\frac{1}{\sqrt{2}}(\hat q-i\hat p)\), which are, in suitable units, the standard creation and annihilation operators of the harmonic oscillator\footnote{These operators  are slightly different from those in (\ref{52}), of course. However, the two pairs $(a,a^\dagger)$ and $(c,c^\dagger)$ both satisfy the  CCR.}. We know that $[a,a^\dagger]=\1$, again on $\scr$. Moreover, simple computations show that
	\be
	K_+=-\frac{i}{2}\,a^2, \qquad K_-=\frac{i}{2}\,{a^\dagger}^2, \qquad K_0=-\frac{1}{4}(2\hat n_0+\1) \qquad K^2=-\frac{3}{16}\1,
	\label{la8}\en
	where $\hat n_0=a^\dagger a$, showing that $K_+^\dagger=K_-$, as expected, and that $K^2<0$. Hence the only possible values of $j$ are those satisfying $j(j+1)=-\frac{3}{16}$, i.e. $j_1=-\frac{1}{4}$ and $j_2=-\frac{3}{4}$. It is easy to find a set of eigenstates of $K_3$, since this is proportional to the Hamiltonian $H_0$. We start considering a vector $e_0$ such that $ae_0=0$, and then we construct (as usual), $e_n=\frac{{a^\dagger}^n}{\sqrt{n!}}\,e_0$, $n\geq 1$. This is nothing but the set $\F_e$ introduced in Section \ref{sect3}. Now, since $\hat n_0e_n=ne_n$, we see that, in particular, $K_0e_0=-\frac{1}{4}e_0$, which can be identified with the second equation in (\ref{la6}) by putting
	\be
	\Phi_{j,q}\quad\rightarrow\quad \Phi_{j_k,-\frac{1}{4}}=e_0, 
	\label{la9}\en
	where $k=1,2$. Hence $q=-\frac{1}{4}$ is the choice of $q$ corresponding to a vector, $\Phi_{j_k,-\frac{1}{4}}$, which is the eigenvector of $K_0$ with the highest eigenvalue. Indeed, using (\ref{la7}) and (\ref{la8}), we have that
	$$
	K_+\Phi_{j_k,-\frac{1}{4}}=0,
	$$
	since $a^2e_0=0$. Of course, this implies, see again (\ref{la7}), that {\em we cannot go up} from $q$ to $q+1$, $q+2$, etc... On the other hand, since $K_-$ is proportional to ${a^\dagger}^2$, it is clear that $K_-e_0$ is proportional to $e_2$ and, more in general, that $K_-^le_0$ is proportional to $e_{2l}$, $\forall l\geq0$. This means that
	\be
	K_0\left(K_-^le_0\right)=-\left(l+\frac{1}{4}\right)\left(K_-^le_0\right).
	\label{la10}\en
	We see that the eigenvalues decrease to $-\infty$ when $l$ increases. It remains to understand what happens with the other vectors $e_{2n+1}$. They are not part of those appearing in (\ref{la10}), in particular, so that we might wonder that maybe they are not eigenstates of $K_0$. This is indeed not the case, as we will show now.  Indeed, it is sufficient to observe that, while $e_0$ satisfies $K_0e_0=-\frac{1}{4}e_0$, the vector $e_1$ is also an eigenstate of $K_0$, but with eigenvalue $-\frac{3}{4}$: $K_0e_1=-\frac{3}{4}e_1$. Hence, as in (\ref{la9}), we put
	\be
	\Phi_{j_k,-\frac{3}{4}}=e_1, 
	\label{la11}\en
	where $k=1,2$. Again, it is clear that $K_+\Phi_{j_k,-\frac{3}{4}}=0$, and that $K_-^le_1$ is proportional to $e_{2l+1}$, $\forall l\geq0$.  Hence
	\be
	K_0\left(K_-^le_1\right)=-\left(l+\frac{3}{4}\right)\left(K_-^le_1\right),
	\label{la12}\en
	and again the eigenvalues decrease to $-\infty$ when $l$ increases. In this way, the set
	$$
	\{K_-^le_0\}\cup \{K_-^le_1\}=\{K_-^l\Phi_{j_k,-\frac{1}{4}}\}\cup \{K_-^l\Phi_{j_k,-\frac{3}{4}}\}
	$$
	returns essentially (except for inessential normalization factors) the o.n. basis $\F_e$ of functions in $\scr$. In other words, all the functions $K_-^le_0$ and $K_-^le_1$, $l\geq0$ are in $\ltwo$. This is because\footnote{This aspect will be more clear later in this section, when considering a different choice of the $K_j$.} the operator $K_0$ is nothing but the Hamiltonian $H_0$ of the harmonic oscillator (expect for an overall constant), whose eigenstates are well known to be square-integrable functions.

\subsection{A different choice of the generators}\label{sectLAb}

In (\ref{la2}) we have considered $K_3=K_0$ to be essentially the Hamiltonian of the harmonic oscillator $H_0$. Indeed, we have defined $	K_3=-\frac{1}{4}H_{0}$. Therefore, it is not a big surprise that the spectrum of $K_3$ is negative and bounded above, but not below. If we are more interested in the analysis of, say, the inverted harmonic oscillator, then we should define $K_3$ in a different way, by putting it proportional to $H_2$, and try to repeat what we have done before. However, there is a simple no-go result that prevents us from doing this {\em in a simple way}. To be more precise, if we replace (\ref{la2}) assuming that $K_3$ is proportional to $H_2$, $K_1$ is proportional to $H_0$ or to $H_1$, while $K_1$ is proportional either to $H_1$ or to $H_0$, there is no solution satisfying (\ref{la3}) and being (even only formally) Hermitian. If we want to relate $H_j$ with operators $K_j$ satisfying (\ref{la3}), the only choice is to abandon the hermiticity of $K_j$. Indeed, if we introduce the operators 
	\begin{equation}
		L_1=\frac{i}{4}H_{0} \qquad 
		L_2=\frac{1}{4}H_{1} \qquad 
		L_3=\frac{i}{4}H_{2},   
		\label{la13}
	\end{equation}
	they {\em look} as a generator of a $\mathfrak{su}(1,1)$ algebra, since they obey the following commutators (which, as usual, we must interpret properly):
	\be
	[L_1,L_2]=-iL_3,\qquad [L_2,L_3]=iL_1, \qquad [L_3,L_1]=iL_2.
	\label{la14}\en
	However,  $L_j\neq L_j^\dagger$, $j=1,3$, while $L_2=L_2^\dagger$. Hence we are not really dealing with  ${\bf su}(1,1)$, but with a slightly different version of it which was first considered, to our knowledge, in \cite{bag2021JPA}. We list a few useful results of this algebra in the Appendix \ref{appendixC}.  
	
	As we will see here, the approach we are going to discuss is not particularly different from the one we have proposed before, but the conclusions are different, and reflect what was done before in this paper and in \cite{bagiqho}.

	We start by defining
	\be
	L_0=L_3, \qquad L_\pm=L_1\pm i L_2, \qquad L^2=L_0^2-L_0-L_+L_-,
	\label{la4}\en
	and we observe that
	\be
	L_0^\dagger=-L_0, \quad L_\pm^\dagger =-L_\pm, \quad [L_0,L_\pm]=\pm L_\pm, \quad [L_+,L_-]=-2L_0, \quad [L^2,L_j]=0,
	\label{la15}\en
	$j=1,2,3,\pm$. We see that there are similarities but also differences with what we have seen in (\ref{la5}). For example, $L_+$ is not an adjoint of $L_-$. And $L_0$ is not self-adjoint. On the other hand, the commutation rules between $L_0$ and $L_\pm$ reflect those between $K_0$ and $K_\pm$, and this allows us to repeat the same construction as before. In particular, since $[L^2,L_0]=0$, we can look for simultaneous eigenstates of these two operators. Again, we write
	\be
	L^2\Psi_{j,q}=j(j+1)\Psi_{j,q},\qquad L_0\Psi_{j,q}=q\Psi_{j,q},
	\label{la16}\en	
	where $j$ is not necessarily positive and $q$ is not necessarily real. Actually, since (formally) $L_0^\dagger=-L_0$, its (standard) eigenvalues should be purely imaginary. Hence $q$ should be purely imaginary.
	
	\vspace{2mm}
	
	{\bf Remark:--} It is important to notice that the fact that $L_0$ is only formally self-adjoint can be understood in view of the results in \cite{bag2021JPA}, where it was proved that the eigenvalues of $H_{IHO}$ are indeed purely imaginary, but they corresponds to eigenvectors which are tempered distributions  rather than square-integrable functions. Hence, using $L_3=\frac{i}{4}H_{2}$, we see that, contrarily to what observed before, the eigenvalues of $L_3$ are indeed also real. This is infact what we get below, where we prove that these eigenvalues  are $-\left(l+\frac{1}{4}\right)$  and $-\left(l+\frac{3}{4}\right)$, $l=0,1,2,3,\ldots$. 
	
	\vspace{2mm}
	
	Some results on extended $\mathfrak{su}(1,1)$ can be found in the appendix \ref{appendixC}. Here we just observe that similar formulas as those in (\ref{la7}) are also deduced here:
		\be
	L_+\Psi_{j,q}=(q-j)\Psi_{j,q+1}, \qquad L_-\Psi_{j,q}=(q+j)\Psi_{j,q-1},
	\label{la17}\en
	at least if $L_\pm\Psi_{j,q}\neq0$.
	
	After this general introduction, let us see what these operators and vectors become with our specific choice in (\ref{la13}). In terms of the bosonic operators we find that
	\be
	L_0=L_3=-\frac{i}{4}\left({a^\dagger}^2+a^2\right), \qquad L_1=\frac{i}{4}\left(a^\dagger a+aa^\dagger\right), \qquad L_2=\frac{i}{4}\left({a^\dagger}^2-a^2\right),
	\label{la18}\en
	so that, in particular
	\be
	L_+=-\frac{1}{4}\left(a^\dagger-i\,a\right)^2=-\frac{1}{2}\,A^2, \qquad L_-=\frac{1}{4}\left(a^\dagger+i\,a\right)^2=-\frac{1}{2}\,B^2,
	\label{la19}\en
	where we have introduced the operators
	\be
	A=\frac{1}{\sqrt{2}}\left(a^\dagger-ia\right), \qquad B=\frac{i}{\sqrt{2}}\left(a^\dagger+ia\right).
	\label{la20}\en
	\be
	[A,B]=\1,
	\label{la21}\en
	(as usual, in the sense of unbounded operators). Since $B^\dagger\neq A$, these are pseudo-bosonic operators\footnote{More exactly, they are $\scr$-pseudo bosons, \cite{bagspringerbook}.}. In terms of these operators we can rewrite $L_0$ and $L^2$ as follows:
	\be
	L_0=-\frac{1}{2}\left(\hat n+\frac{1}{2}\1\right), \qquad L^2=-\frac{3}{16}\1,
	\label{la22}\en
	where $\hat n=BA$  is the pseudo-bosonic number operator. We see that $L^2=K^2$. Hence, in this particular case, $j\in\mathbb{R}$ and, to be more explicit, $j=-\frac{1}{4}$ or $j=-\frac{3}{4}$, as before. Repeating what we have done in the previous case, we can first find the eigenstates of $\hat n$ looking, first, to the vacuum of $A$: $A\varphi_0=0$, and then constructing a family of vectors $\varphi_n=\frac{B^n}{\sqrt{n!}}\varphi_0$, $n\geq0$.     	
    We know that (at least formally) $\hat n\varphi_n=n\varphi_n$, $n\geq0$. Now, equation $L_0\Psi_{j,q}=q\Psi_{j,q}$ in (\ref{la16}) produces, if we take $\Psi_{j,q}=\varphi_0$, 
	$$
	-\frac{1}{2}\left(\hat n+\frac{1}{2}\1\right)\varphi_0=-\frac{1}{4}\varphi_0,
	$$
	so that we can put $\Psi_{j_k,-\frac{1}{4}}=\varphi_0$, $k=1,2$. Similarly to what observed in the previous case, actng with $L_+$ and with its powers on $\Psi_{j_k,-\frac{1}{4}}$ destroys the state, since $L_+\Psi_{j_k,-\frac{1}{4}}\propto A^2\varphi_0=0$. On the other hand, acting on $\Psi_{j_k,-\frac{1}{4}}$ with $L_-$ and with its powers produce vectors which are proportional to $\varphi_{2l}$: $L_-^l\varphi_0\propto\varphi_{2l}$, $l\geq0$. Hence the eigenvalues of $L_0$ are the following:
	$$
	L_0\left(L_-^l\varphi_0\right)=-\left(l+\frac{1}{4}\right)\left(L_-^l\varphi_0\right),
	$$
	which is very similar to (\ref{la10}). Analogously, if we consider the vector $\varphi_1$, we observe that 
	$$
	-\frac{1}{2}\left(\hat n+\frac{1}{2}\1\right)\varphi_1=-\frac{3}{4}\varphi_1,
	$$
	so that we can put $\Psi_{j_k,-\frac{3}{4}}=\varphi_1$, $k=1,2$. Again, $L_+\Psi_{j_k,-\frac{3}{4}}\propto A^2\varphi_1=0$, while $L_-^l\varphi_1\propto\varphi_{2l+1}$, $l\geq0$, so that 
	$$
	L_0\left(L_-^l\varphi_1\right)=-\left(l+\frac{3}{4}\right)\left(L_-^l\varphi_1\right),
	$$
	which is very similar to (\ref{la12}).
%{\color{red}We are treating \(L_{+}\) as the lowering operator and \(L_{-}\) as the raising operator. But if we notice that it's also true that \(A=B_{-}\) and \(B=-A_{-}\), we could work with \(L_{-}\) as lowering operator,  that i think it's more natural. Following the same procedure we will obtain the set of eigenstates \(\varphi_n^{(-)}\) instead of \(\varphi_n^{(+)}\). 
	
%The set \(\varphi_n^{(+)}\) can be obtained by treating \(L_{+}\) as the lowering operator, as done above, or by changing appropriately the signs of the generators.
%For example, by changing properly the sign of the generators should be possible to invert \(L_{\pm}\), in such a way that the new \(\tilde{L}_{\pm}=L_{\mp}\). We would have \(\tilde{L}_{-}=L_{+}=-\frac{1}{2}A^2=-\frac{1}{2}(-iA_{+})^2\), and by using the same procedure we would obtain \(\varphi_n^{(+)}\) }

What we have seen so far is apparently not very different from what we have done with the choce (\ref{la2}) of the generators. However, there is indeed a big difference, which clearly emerges when we write down the explicit expression for $\varphi_0$ in the $x$ coordinates. Indeed, if we look for the solution of the equation $A\varphi_0=0$, it is straightforward to rewrite this as $\varphi_0'(x)=\frac{1-i}{1+i}x\varphi_0(x)$, whose solution is simply $\varphi_0(x)=Ne^{-i\frac{x^2}{2}}$, which is not in $\ltwo$. This is in full agreement with what we have shown in (\ref{54}). Indeed the $\varphi_0(x)$ we have just found is exactly the same as in (\ref{54}), for $n=0$, $\Omega=1$ and $N=\frac{1}{\pi^{1/4}}$. The other vectors introduced here, the eigenstates of $L_0 $ $\varphi_n$, are exactly those in (\ref{54}).

\vspace{2mm}

It is clear that another possible choice of the generators is also possible, the one in which, rather than on the eigenstates of $H_0$ or $H_2$, we focus on those of $H_1$. But this analysis does not add much to what we have discussed so far and will not be given.

\vspace{2mm}

{\bf Remark:--} It is possible to use the results given in (\ref{la1}) to deduce a (formal) expression of the operators $T_{\pm}$ in (\ref{61}). Indeed, from the latter and from the eigenvalue equations $H_0\varphi_n=E_n\varphi_n$ and $H_2\varphi_n^{(\pm)}=\pm i E_n\varphi_n^{(\pm)}$ we can easily deduce the following relation: $T^{-1}_\pm H_2T_\pm=\pm i H_0$.

Then, since
$$
e^{-\gamma H_1}H_2e^{\gamma H_1}=H_2\cos(4\gamma)-iH_0\sin(4\gamma),
$$
we conclude that, fixing  \(\gamma=\pm\frac{\pi}{8}\), we have
$$
e^{\mp\frac{\pi}{8} H_1}H_2e^{\pm\frac{\pi}{8} H_1}=\mp iH_0,
$$
and therefore we conclude that, at least formally, \(T_{\pm}=e^{\mp\frac{\pi}{8} H_1}\).

%
%
% The first simple remark is that the operators $T_\pm$ cannot be of the form $e^{i\gamma H_1}$, because of the lack of possible choice of $\gamma$. Indeed, we can easily find that
%$$
%e^{-i\gamma H_1}H_2e^{i\gamma H_1}=H_2\cosh(4\gamma)+H_0\sinh(4\gamma)
%$$
%as well as $e^{-i\gamma H_1}H_0e^{i\gamma H_1}=H_0\cosh(4\gamma)+H_2\sinh(4\gamma)$. Our claim is based on the fact that $\cosh(4\gamma)\neq0$ for all possible choices of $\gamma$. We also find that.....{\color{red}Emanuele: could you please check the form of this operator? I guess is a bit long, but not complicated}
%{\color{blue}
%Non ho ben capito se mi chiede di verificare la correttezza dell'equazione del Remark o se mi chiede di trovare la forma degli operatori \(T_{\pm}\), per sicurezza ho fatto entrambe le cose. L'equazione precedente dai miei conti sembra corretta ma le allego i conti in un altro foglio. Per quanto riguarda la forma degli operatori \(T_{\pm}\) formalmente il ragionamento è il seguente. (Allego sempre conti nel foglio)
%$$
%e^{-\gamma H_1}H_2e^{\gamma H_1}=H_2\cos(4\gamma)-iH_0\sin(4\gamma)
%$$
%when \(\gamma=\frac{\pi}{8}\) we have
%$$
%e^{\mp\frac{\pi}{8} H_1}H_2e^{\pm\frac{\pi}{8} H_1}=\mp iH_0
%$$
%that suggest, at least formally, \(T_{\pm}=e^{\mp\frac{\pi}{8} H_1}\)
%}

\section{Conclusions}\label{sect6}

In this paper, we have proposed a unifying view to three Hamiltonians which are particularly interesting in quantum mechanics, $H_{BK}$, $H_{HO}$ and $H_{IHO}$. In particular, we have considered their eigenvectors and eigenvalues, and we have shown that $\ltwo$ is not enough in this analysis. This conclusion is confirmed both using a functional analysis point of view to the system and by considering a different Lie-algebraic approach. In this particular case, we have shown how to use $\mathfrak{su}(1,1)$ 
to construct a unifying view for the three Hamiltonians, but we have also shown that, depending on what we choose to diagonalize (and this is reflected by our definitions in (\ref{la2}) and (\ref{la13})),  $\ltwo$ can be the relevant space or not. 

This analysis is, we believe, interesting itself, but it is also a preliminary step towards a better understanding of the role of $H_{BK}$ in the study of the zeros of the Riemann function.  This is part of our future work, together with more mathematical results and connections with physics.

	\section*{Acknowledgments}  F. B. acknowledges partial support from Palermo University and from G.N.F.M. of the INdAM. F, B. also  acknowledges partial support from PRIN Project {\em  Transport phonema in low dimensional structures: models, simulations and theoretical aspects}, and partial support  by  the project ICON-Q, Partenariato Esteso NQSTI - PE00000023, Spoke 2.

 \appendix
	
	\renewcommand{\theequation}{A.\arabic{equation}}

		\renewcommand{\theequation}{A.\arabic{equation}}
	
	\section{Appendix: an interesting space}\label{appendixB}
	
	 To show this, we start introducing the following set 
	\begin{equation}
		\mathcal{G}=\{f:\mathbb{R}\rightarrow\mathbb{C} \quad s.t. \quad f(x)=F(z)|_{\mathbb{R}}, \quad F:\mathbb{C}\rightarrow\mathbb{C} \quad \mbox{analytic}\}.
	\end{equation}
	This space is a vector space, and it is also an algebra, since the product of analytic functions is analytic. However, it is not a *-algebra since the complex conjugate of an analytic function is no longer analytic, in general.
	
	We can endow this space with a topology \(\tau_{\mathcal{G}}\) given by the seminorms
	\(p_{n}(f)=sup_{z\in \overline{B(0,n)}}|F(z)|\), where \(\overline{B(0,n)}=\{z\in \mathbb{C} \quad s.t.\quad |z| \leq n\}\). Indeed, it is easy to check that each $p_n(.)$ is a seminorm. It is also possible to show that the family \(\{p_n\}_{n\in \mathbb{N}}\) is separating, i.e that $p_n(f)=0$  $\forall n\in \mathbb{N}$ if and only if $f=0$. For that we first observe that, if \(f(x)=0\), then the uniqueness of the analytic extension implies that \(F(z)=0\), so that $p_n(f)=0$ for all $n$. Viceversa, if  $p_n(f)=0$ $\forall n$, then $F(z)=0$ for all $z\in\mathbb{C}$. Hence $f(x)=0$.

	Next we have the following result:
	\begin{lemma}
		\((\mathcal{G},\tau_\mathcal{F})\) it's a Frechét space
	\end{lemma}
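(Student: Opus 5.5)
The plan is to identify $\mathcal{G}$ with the space $\mathcal{O}(\mathbb{C})$ of entire functions, with the topology of uniform convergence on compact sets, and then check the three defining properties of a Fréchet space. These are: a locally convex topology given by a countable separating family of seminorms, metrizability, and completeness. The typo $\tau_\mathcal{F}$ in the statement is read as $\tau_\mathcal{G}$.

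\emph{Step 1: identification and seminorms.} The map $f\mapsto F$ is well defined and injective by the identity theorem, as already observed above. It is surjective onto entire functions, since every entire $F$ restricts to some $f$ on $\mathbb{R}$. It is linear, so we may work with $F$ directly. Each $p_n$ is finite, because $F$ is continuous on the compact disc $\overline{B(0,n)}$, and is a seminorm. The family is increasing, $p_n\le p_{n+1}$, countable and separating, as shown above. Hence $\tau_\mathcal{G}$ is a locally convex Hausdorff topology.

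\emph{Step 2: metrizability.} Countability gives metrizability via the standard invariant metric
$$d(f,g)=\sum_{n\ge1}2^{-n}\frac{p_n(f-g)}{1+p_n(f-g)}.$$
One checks in a routine way that $d$ induces $\tau_\mathcal{G}$.

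\emph{Step 3: completeness (the only substantive point).} Let $(f_k)$ be $d$-Cauchy, which means Cauchy in every $p_n$. Then the $F_k$ are uniformly Cauchy on each disc $\overline{B(0,n)}$, so they converge uniformly on compact subsets of $\mathbb{C}$ to a continuous function $F$. By Weierstrass' theorem (equivalently, Morera's theorem applied to the uniform limit), $F$ is entire. Set $f=F|_{\mathbb{R}}\in\mathcal{G}$. Then $p_n(f_k-f)\to0$ for every $n$, so $d(f_k,f)\to0$.

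The main obstacle is only this appeal to the Weierstrass convergence theorem, which guarantees that the limit stays analytic. Everything else is bookkeeping.
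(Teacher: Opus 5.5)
Your proposal is correct and follows the same route as the paper: the only substantive step is completeness, where a Cauchy sequence converges uniformly on compact subsets of $\mathbb{C}$ and the Weierstrass convergence theorem keeps the limit entire. You also spell out what the paper leaves implicit, namely the identification of $\mathcal{G}$ with the entire functions via the identity theorem and metrizability from the countable increasing family $p_n$; this is fine.
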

	\begin{proof}
		We have only to prove that the space is complete. Let's consider a Cauchy sequence, \(f_n \in \mathcal{F}\), so \(p_n(f_k-f_l)=sup_{z\in \overline{B(0,n)}}|F_k(z)-F_l(z)|\rightarrow 0  \quad \forall n \in \mathbb{N}\). From this we can conclude that \(F_k\) is a sequence of analytic functions that converges uniformly on every compact subsets  of \(\mathbb{C}\) to a function \(F(z)\). From the Weierstrass theorem, it follows that \(F(z)\) is analytic on \(\mathbb{C}\). 
        The result follow by taking \(f(x)=F(z)|_{\mathbb{R}}\). 
	\end{proof}
	A key feature of this space is that it includes a subspace which is dense in \(L^2(\mathbb{R})\). Indeed we have
	\begin{prop}
		\(\mathcal{G}\) contains a dense subspace of \(L^2(\mathbb{R})\).
	\end{prop}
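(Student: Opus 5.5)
The natural candidate for the dense subspace is the linear span of the harmonic oscillator eigenfunctions, $\Lc_\varphi=l.s.\{\varphi_n\}$. Equivalently, with $\Omega=1$, it is $l.s.\{e_n\}$, the span of the basis $\F_e$ used in Section \ref{sect3}. By (\ref{54}), each $\varphi_n(x)$ is a polynomial times the Gaussian $e^{-\frac{1}{2}\Omega x^2}$. The plan has two steps: show that this span lies in $\mathcal{G}$, and show that it is dense in $\ltwo$.

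For the first step, observe that each $\varphi_n(x)$ is the restriction to $\mathbb{R}$ of the entire function
$$
F_n(z)=\frac{1}{\sqrt{2^n\,n!}}\left(\frac{\Omega}{\pi}\right)^{1/4}H_n\left(\sqrt{\Omega}\,z\right)e^{-\frac{1}{2}\Omega z^2}.
$$
$F_n$ is entire because polynomials and the exponential of a polynomial are entire, and products of entire functions are entire. Since $\mathcal{G}$ is a vector space, finite linear combinations of such restrictions stay in $\mathcal{G}$. Hence $\Lc_\varphi\subset\mathcal{G}$. Each $\varphi_n$ is also in $\ltwo$, so $\Lc_\varphi\subset\mathcal{G}\cap\ltwo$ is a genuine subspace of $\ltwo$.

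For the second step, density follows because $\F_\varphi=\{\varphi_n\}$ is an orthonormal basis of $\ltwo$, as already used in Section \ref{sect4}. If a self-contained justification is wanted, the standard argument is as follows.
\begin{itemize}
\item Suppose $g\in\ltwo$ is orthogonal to every $\varphi_n$. Since the $H_n$ span all polynomials, $g$ is then orthogonal to every $x^ke^{-\frac{1}{2}\Omega x^2}$.
\item Consider $G(w)=\int_{\mathbb{R}}\overline{g(x)}\,e^{-\frac{1}{2}\Omega x^2}e^{wx}\,dx$. It is entire, by Cauchy–Schwarz and the Gaussian decay, and it can be differentiated under the integral sign.
\item All its derivatives at $w=0$ vanish, so $G\equiv0$. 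Taking $w=-it$ shows that the Fourier transform of $\overline{g}\,e^{-\frac{1}{2}\Omega x^2}$, an $L^1$ function, is identically zero.
\item Hence $g=0$ almost everywhere, and the orthogonal complement of $\Lc_\varphi$ is trivial.
\end{itemize}
This shows $\overline{\Lc_\varphi}=\ltwo$.

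The only real subtlety is the completeness of the Hermite functions, and I would simply invoke it as the well-known fact already used in the paper. The rest is the elementary observation that Gaussians times polynomials extend to entire functions. One could add that $\mathcal{G}$ itself is not contained in $\ltwo$: for example, $1\in\mathcal{G}$ is not square integrable. That is why the statement speaks of a dense \emph{subspace} contained in $\mathcal{G}$, and the natural choice is $\mathcal{G}\cap\ltwo\supseteq\Lc_\varphi$.
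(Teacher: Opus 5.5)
Your proposal is correct and follows the paper's proof. The paper likewise notes that $e^{-z^2/2}H_n(z)$ is entire, so the Hermite functions (with $\Omega=1$) lie in $\mathcal{G}$, and then invokes their completeness in $L^2(\mathbb{R})$ to conclude that their linear span is dense. Your entire-function/Fourier argument for that completeness is a valid self-contained supplement to what the paper simply quotes as well known.
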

	\begin{proof}
		Consider \(F(z) = e^{-z^2/2}\) and let \(H_n(z)\) denote the complex Hermite polynomials. They are both analytic functions, so that \(e_n(z) = F(z) H_n(z)\) is analytic as well, for all $n$. Hence, \(e_n(x) = e_n(z)\vert_{\mathbb{R}} \in \mathcal{G}\). But these functions are, a part from a normalization factor, the eigenstates of the harmonic oscillator which form an o.n. basis in $\ltwo$. Then the linear span \(\mathrm{l.s.}\{ e_n(x) \}\subset \mathcal{G}\) is dense in \(L^2(\mathbb{R})\).
	\end{proof}
	The set $\G$ is stable under the action of the multiplication and of the derivation operator. Therefore, $\G$ is stable under the action of $H_j$, $j=0,1,2$.

			\renewcommand{\theequation}{B.\arabic{equation}}
	
	\section{Appendix: extended $\mathfrak{su}(1,1)$}\label{appendixC}
	
	Let us consider three operators, $x_\pm$ and $x_0$ satisfying (in the sense of unbounded operators, if needed)  $[x_0,x_\pm]=\pm x_\pm$, and $[x_+,x_-]=-2x_0$, but with $x_+^\dagger\neq x_-$ and $x_0^\dagger\neq x_0$. 
	
	First we put, using the same abuse of notation as in $\mathfrak{su}(1,1)$, 
	\be
	x^2=x_0^2-\frac{1}{2}(x_+x_-+x_-x_+)=x_0^2+x_0-x_-x_+=x_0^2-x_0-x_+x_-.
	\label{c1}\en
The operator $x^2$ commutes with each $x_\alpha$: $[x^2,x_\alpha]=0$, for $\alpha=0,\pm$. In particular $x^2$ and $x_0$ commute, we can look for common eigenstates of these two operators. Using again the same notation adopted for ordinary $\mathfrak{su}(1,1)$, we assume the following: there exists a non zero vector $\Psi_{j,q_0}\in\D$ (the domain of all the operators relevant for us) satisfying the following eigenvalue equations:
	\be
	\left\{
	\begin{array}{ll}
		x^2\Psi_{j,q_0}=j(j+1)\Psi_{j,q_0},\\
		x_0\Psi_{j,q_0}=q_0\Psi_{j,q_0},\\
	\end{array}
	\right.
	\label{c2}\en
	for some $j$ and $q_0$. Of course, there is no reason a priori to assume here that $j$ and $q_0$ are real or positive. This is because, as already observed, $x^2$ is not positive or self-adjoint, and $x_0$ is not self-adjoint, too. This is the main difference with respect to ordinary $\mathfrak{su}(1,1)$. However, some useful result can still be found. Indeed, using the commutation rules for $(x^2,x_\alpha)$ we deduce that
	\be
	\left\{
	\begin{array}{ll}
		x^2(x_\pm\Psi_{j,q_0})=j(j+1)(x_\pm\Psi_{j,q_0}),\\
		x_0(x_\pm\Psi_{j,q_0})=(q_0\pm1)(x_\pm\Psi_{j,q_0}),\\
	\end{array}
	\right.
	\label{c3}\en
	at least if $\Psi_{j,q_0}\notin \ker(x_\pm)$. This means that $x_\pm$ are ladder operators and, in particular, that $x_+$ is a raising while $x_-$ is a lowering operator. Using the same standard arguments for $\mathfrak{su}(1,1)$, we can also deduce that
	\be
	\left\{
	\begin{array}{ll}
		x_+\Psi_{j,q_0}=(q_0-j)\Psi_{j,q_0+1},\\
		x_-\Psi_{j,q_0}=(q_0+j)\Psi_{j,q_0-1}.\\
	\end{array}
	\right.
	\label{c4}\en
	These equations are in agreement with the fact that, as it is easy to check,
	$$
	[x_0,x_-x_+]=[x_0,x_+x_-]=0.
	$$
	In fact, from (\ref{c3}) we see that $x_0$ and $x_-x_+$ have the same eigenvectors. The same is true for  $x_0$ and $x_+x_-$.

	We have several possibilities: 
	
	\vspace{2mm}
	
	{\bf Case 1:--} for some $m\in\mathbb{N}_0=\mathbb{N}\cup\{0\}$ we have $x_-^{m-1}\Psi_{j,q_0}\neq0$ and $x_-^{m}\Psi_{j,q_0}=0$. In this case the set of eigenvalues of $x_0$, $\sigma(x_0)$, is bounded below: $\sigma(x_0)=\{q_0-m+1,q_0-m+2,q_0-m+3,\ldots\}$.

	\vspace{2mm}
	
	{\bf Case 2:--} for some $k\in\mathbb{N}_0=\mathbb{N}\cup\{0\}$ we have $x_+^{k-1}\Psi_{j,q_0}\neq0$ and $x_+^{k}\Psi_{j,q_0}=0$. In this case $\sigma(x_0)$, is bounded above: $\sigma(x_0)=\{\ldots,q_0+k-3,q_0+k-2,q_0+k-1\}$.

	\vspace{2mm}
	
	{\bf Case 3:--} both conditions above are true: for some $m\in\mathbb{N}_0=\mathbb{N}\cup\{0\}$ we have $x_-^{m-1}\Psi_{j,q_0}\neq0$ and $x_-^{m}\Psi_{j,q_0}=0$, and  for some $k\in\mathbb{N}_0=\mathbb{N}\cup\{0\}$ we have $x_+^{k-1}\Psi_{j,q_0}\neq0$ and $x_+^{k}\Psi_{j,q_0}=0$. In this case, of course, $\sigma(x_0)$, is bounded above and below: $\sigma(x_0)=\{q_0-m+1,q_0-m+2,\ldots,q_0+k-2,q_0+k-1\}$.
	
	\vspace{2mm}
	
	{\bf Case 4:--} neither Case 1, nor Case 2, hold. Then $\sigma(x_0)$ has no bound below and above.

	\end{document}